\newtheorem*{theorem*}{Theorem}
\newtheorem{theorem}{Theorem}
\newtheorem*{lemma*}{Lemma}
\newtheorem{lemma}{Lemma}
\newtheorem*{proposition*}{Proposition}
\newtheorem{proposition}{Proposition}
\newtheorem*{corollary*}{Corollary}
\newtheorem{corollary}{Corollary}
\newtheorem*{fact*}{Fact}
\theoremstyle{definition}
\newtheorem{definition}{Definition}
\newtheorem{example}{Example}
\crefname{lemma}{Lemma}{Lemmas}
\title{Connected Incomplete Preferences}
\author{Leandro Gorno \phantom{.....} Alessandro T. Rivello\\ \smaller \phantom{.} FGV EPGE \phantom{.............................} FGV EPGE \phantom{.........}}
\date{First draft: January 2020. This version: April 2026.}
\begin{document}

\begin{abstract}
This paper explores a new class of incomplete preferences---termed ``connected preferences''---in which maximal domains of comparability are topologically connected. We provide necessary and sufficient conditions for continuous preferences to be connected. We also characterize their maximal domains of comparability. Our results extend classical findings in decision theory by linking topological properties of the choice space with the structure of preferences, offering a novel perspective on incompleteness in economic models.
\vspace{10pt}\\
\noindent \textit{Keywords:} incomplete preferences, maximal elements, continuity, connectedness.\vspace{10pt}\\
\noindent \textit{JEL classifications:} C61, D81.
\end{abstract}

\maketitle

\onehalfspacing

\vspace{10pt}

\section{Introduction}
The standard model of choice in economics is the maximization of a complete and transitive preference relation over a fixed set of alternatives. While completeness of preferences is often considered a strong assumption, it is challenging to weaken it and preserve enough structure for the resulting model to yield interesting results. This paper contributes to this effort by studying ``connected preferences''---preferences that may be incomplete but possess connected ``maximal domains'' of comparability.

The concept of a maximal domain of a preference---a subset of the choice space on which the preference is complete, and which cannot be extended while preserving completeness---is studied extensively in \citet*{Gorno2018}. That work establishes that such domains cover the space of alternatives and, crucially, that the maximal domains of continuous preferences are necessarily closed. This paper builds on those results to investigate under which conditions continuous preferences are connected.

Connected preferences are particularly interesting because the nature of their incompleteness is intrinsically tied to the topology of the choice space. Such a link means that the indecision associated with the lack of completeness is not arbitrary: it reflects---and is constrained by---the alternatives themselves. This opens the door to a richer interplay between economic reasoning and topological intuition, where notions like continuity and connectedness acquire direct behavioral meaning. By grounding incompleteness in the structure of the choice space, we gain a framework capable of capturing more realistic decision environments while still supporting powerful analytical tools.

We offer four new results. Theorem~\ref{th:necessity} identifies a basic necessary condition for a continuous preference to be connected in the sense above, while Theorem~\ref{th:sufficiency} provides sufficient conditions when the space of alternatives is compact. Building on the latter, Theorem~\ref{th:characterization} characterizes the maximal domains of comparability. Finally, Theorem~\ref{th:sufficiency_arc} proves that, in compact metric spaces, continuous antisymmetric preferences without jumps have arc-connected maximal domains.

Methodologically, we adopt an incomplete preference perspective on the theoretical literature relating assumptions about preferences with the structure of the space of alternatives. For example, \citet*{Schmeidler1971} shows that every nontrivial preference on a connected topological space which satisfies seemingly innocuous continuity conditions must be complete. \citet*{KhanUyanik2021} revisit Schmeidler's theorem and connect it to earlier results by \citet*{Eilenberg1941}, \citet*{Sonnenschein1965}, and \citet*{Sen1969}. They provide a thorough analysis of the logical relationships among continuity, completeness, transitivity, and the connectedness of the space.

In particular, Theorem 1 in \citet*{KhanUyanik2021} implies a converse to Schmeidler's theorem: if every nontrivial ``Schmeidler preference'' is complete, the underlying space must be connected. We offer a different kind of converse: any compact space that admits a complete Schmeidler preference with connected indifference classes and no ``jumps'' must be connected.

The rest of the paper is organized as follows. Section~\ref{sec:connected} introduces the concept of connected preferences and establishes necessary and sufficient conditions for continuous preferences to be connected. Section~\ref{sec:domains} uses the concept of connected preferences to characterize the maximal domains of comparability. Section~\ref{sec:arc_connected} investigates the stronger property of arc-connectedness, examining compact metric spaces. Finally, Section~\ref{sec:applications} presents some applications of the results. All proofs are relegated to Appendix~\ref{app:proofs}.

\section{Connected preferences}
\label{sec:connected}

\subsection{Preliminaries}
Let $X$ be a (nonempty) set of alternatives equipped with some topology. A \textit{preference} is a reflexive and transitive binary relation on $X$. For the rest of this paper, we consider a fixed preference $\succsim$, with $\sim$ and $\succ$ denoting the symmetric and asymmetric parts of $\succsim$, respectively. A preference $\succsim$ is \textit{trivial} if there are no $x,y \in X$ such that $x \succ y$. If $\succsim$ is not trivial, then it is \textit{nontrivial}.

We recall some standard topological definitions. A set $A \subseteq X$ is \textit{connected} if it cannot be partitioned into two disjoint nonempty sets that are closed relative to $A$. A set $A \subseteq X$ is \textit{arc-connected} if, for every pair of distinct points $x,y \in A$, there exists a continuous injection $f: [0,1] \to A$ such that $f(0)=x$ and $f(1)=y$; the image $f([0,1])$ is called an \textit{arc} from $x$ to $y$. Finally, $X$ is \textit{uniquely arc-connected} if it is Hausdorff and for any distinct $x,y \in X$, there is a unique arc connecting them.

$\succsim$ is \textit{continuous} if $\left\{y \in X\middle| y \succsim x \right\}$ and $\left\{y \in X\middle| x \succsim y \right\}$ are closed sets for every $x \in X$. $\succsim$ has \textit{connected indifference classes} if $\left\{y \in X\middle| y \sim x \right\}$ is connected for every $x \in X$. A set $A \subseteq X$ \textit{contains every indifferent alternative} if $x \in A$, $y \in X$, and $x \sim y$ implies $y \in A$. A set $A \subseteq X$ \textit{has no exterior bound} if whenever $x \in X$ satisfies either $x \succsim y$ for all $y \in A$ or $y \succsim x$ for all $y \in A$, then $x \in A$. 

$\succsim$ is \textit{complete} on a set $A \subseteq X$ if, for every $x,y \in A$, either $x \succsim y$ or $y \succsim x$ holds. The set $A$ is a \textit{domain} if $\succsim$ is complete on $A$. If $A$ is a domain such that there exists no larger domain containing it, then $A$ is a \textit{maximal domain}. A maximal domain is \textit{trivial} if it is an indifference class and \textit{nontrivial} if it is not.

\subsection{Main definition}

Maximal domains are studied in \citet*{Gorno2018}, where he shows that if $\succsim$ is continuous, the maximal domains are closed in $X$.\footnote{See Theorem 1 in \citet*{Gorno2018}.} The main concept of this paper targets another topological property of the maximal domains:

\begin{definition}
Preference $\succsim$ is \textit{connected} if every maximal domain is connected.
\end{definition}

This definition provides an intuitive way of linking decision makers' indecisiveness with the topological disconnection of the space of alternatives. For if $\succsim$ is a connected preference and two alternatives $x$ and $y$ are comparable according to $\succsim$, then both $x$ and $y$ must belong to a connected subset of $X$ with the property that every two alternatives in the set are comparable according to $\succsim$. 

We will restrict attention to preferences that are not only connected, but also continuous. As a result, in this paper, maximal domains are necessarily closed subsets of $X$ and the question of interest is whether they are also connected.

\subsection{A necessary condition}

A natural first step towards a characterization of connected preferences is to obtain a simple necessary condition.

\begin{definition}
A \textit{jump} of preference $\succsim$ is a pair of alternatives $(x,y) \in X \times X$ such that $x \succ y$ and there is no $z \in X$ satisfying $x \succ z \succ y$.
\end{definition}

The notion of preferences without jumps is not new; its content coincides with a well-known definition of order-denseness for sets.\footnote{The set $X$ is said to be $\succsim$\textit{-dense} if for every $x,y \in X$ satisfying $x \succ y$ there exists $z \in X$ such that $x \succ z \succ y$ (see \citet*{Ok2007}, p. 92). Evidently, $X$ is $\succsim$-dense if and only if $\succsim$ has no jumps. We should note that alternative notions of order-denseness exist in the literature.} Our first result shows that connected continuous preferences cannot have jumps.

\begin{theorem}
\label{th:necessity}
If $\succsim$ is continuous and connected, then $\succsim$ has no jumps.
\end{theorem}

In particular, Theorem~\ref{th:necessity} implies that, when the space of alternatives is connected, preferences admitting a continuous utility representation cannot have jumps. However, it is easy to see that not every continuous preference without jumps is connected:

\begin{example}
Consider $X=[-1,1]$ equipped with the natural topology and let $\succsim \hspace{2pt} = \left\{(x,y)\in X^2 \middle| x = y \vee x^2 = y^2 =1\right\}$. Then, the preference $\succsim$ is continuous and has no jumps, but is not connected (the maximal domain $\{-1, 1\}$ is not a connected set).
\end{example}

\subsection{A sufficiency theorem}

We already know that continuous and connected preferences cannot have jumps. In this subsection, we provide a set of assumptions which constitute a sufficient condition for a preference to be connected.

\begin{theorem}
\label{th:sufficiency}
If $X$ is compact and $\succsim$ is a continuous preference that has connected indifference classes and has no jumps, then $\succsim$ is connected.
\end{theorem}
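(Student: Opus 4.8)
The plan is to fix a maximal domain $D$ and show it is connected, arguing by contradiction. Two preliminary observations come first. Since $\succsim$ is continuous, $D$ is closed (Theorem~1 in \citet{Gorno2018}), hence compact. Moreover, $D$ contains every indifferent alternative: if $x \in D$ and $y \sim x$, then for each $w \in D$ the comparability of $w$ and $x$ together with $x \sim y$ yields comparability of $w$ and $y$, so $D \cup \{y\}$ is a domain and maximality forces $y \in D$. Consequently, for every $d \in D$ the indifference class $\{z \in X : z \sim d\}$ is contained in $D$ and is connected by hypothesis. Now suppose, for contradiction, that $D = A \cup B$ with $A$, $B$ nonempty, disjoint, and closed in $D$ (hence in $X$, since $D$ is closed). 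Each indifference class of a point of $D$, being a connected subset of $D$, then lies entirely inside $A$ or entirely inside $B$.

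The core of the argument is to manufacture a ``jump'' inside $D$: a pair $a^* \succ b^*$ in $D$ with no $z \in D$ satisfying $a^* \succ z \succ b^*$. Here I would use that $\succsim$ restricted to the compact set $D$ is a continuous complete preorder, so (by the usual finite-intersection-property argument applied to the closed upper-contour sets) every nonempty closed subset of $D$ has a $\succsim$-maximum and a $\succsim$-minimum. Relabelling if necessary, assume $a_0 := \max D \in A$, and set $b^* := \max B$, which exists because $B$ is compact. Since $b^* \in B$ is not in $A$, is weakly below $a_0 = \max D$, and is not indifferent to $a_0$ (otherwise the connected indifference class of $a_0$ would meet both $A$ and $B$), we get $a_0 \succ b^*$. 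Now let $a^* := \min\{d \in A : d \succsim b^*\}$; this set is nonempty (it contains $a_0$) and closed, so $a^*$ exists, lies in $A$, satisfies $a^* \succsim b^*$, and is not indifferent to $b^*$ (same connectedness argument), hence $a^* \succ b^*$. If some $c \in D$ had $a^* \succ c \succ b^*$, then $c \in A$ would put $c$ in $\{d \in A : d \succsim b^*\}$ and contradict minimality of $a^*$, while $c \in B$ would give $c \succ b^* = \max B$, which is impossible; so $(a^*, b^*)$ is the desired jump.

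With the jump in hand, apply gaplessness of $\succsim$ on $X$ to obtain $z \in X$ with $a^* \succ z \succ b^*$; this $z$ cannot lie in $D$, precisely by the jump property. It remains to verify that $D \cup \{z\}$ is a domain, which will contradict the maximality of $D$. Given $d \in D$, comparability of $d$ with $a^*$ and with $b^*$ leaves three cases: if $d \succsim a^*$ then $d \succsim a^* \succ z$; if $b^* \succsim d$ then $z \succ b^* \succsim d$; and the remaining possibility $a^* \succ d \succ b^*$ is ruled out by the jump property. Hence $z$ is comparable to every element of $D$, so $D \cup \{z\}$ is a domain strictly larger than $D$, and the contradiction establishes that $D$ is connected. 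Since $D$ was an arbitrary maximal domain, $\succsim$ is connected.

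I expect the jump step to be the main obstacle. Two subtleties there deserve care: first, the existence of maxima and minima must be earned from compactness and continuity (it is not automatic for preorders); second, one must secure a \emph{strict} gap $a^* \succ b^*$ rather than a mere indifference, and it is exactly at the two points where I invoke ``not indifferent, else the indifference class meets both $A$ and $B$'' that the hypothesis of connected indifference classes does its work. An alternative is to pass to the quotient $D/\!\sim$, observe that it is a compact linearly ordered space which is disconnected, and quote that such a space cannot be densely ordered; but working directly in $D$ after choosing $a_0 = \max D$ keeps the argument self-contained.
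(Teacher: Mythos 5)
Your proof is correct, and its core is the same as the paper's: both arguments take a disconnection $D=A\cup B$ of a (compact, closed) maximal domain, use continuity and compactness to extract extremal elements (your $b^*=\max B$ and $a^*=\min\{d\in A: d\succsim b^*\}$ are exactly the paper's $\overline{x}_B$ and $\underline{x}_C$), exhibit a ``jump'' $a^*\succ b^*$ with nothing of $D$ strictly in between, and then use gaplessness to produce $z\notin D$ that is comparable to all of $D$, contradicting maximality. Where you genuinely diverge is in how the non-antisymmetric case is handled: the paper first proves the statement for partial orders and then reduces the general case to it by passing to the quotient $X/\sim$, asserting that continuity and compactness are inherited and that $D$ is connected iff $D/\sim$ is (this is where connected indifference classes and the fact that $D$ contains every indifferent alternative enter). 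You instead stay in $D$ and invoke connected indifference classes directly, at precisely the two spots where an indifference $a_0\sim b^*$ or $a^*\sim b^*$ straddling $A$ and $B$ must be excluded. Your route buys a self-contained argument that avoids verifying quotient-space facts (quotient topology, inherited continuity, the connectedness equivalence), at the cost of threading the indifference-class hypothesis through the construction; the paper's route modularizes the proof by reducing to the cleaner antisymmetric setting. Your auxiliary steps (closedness and compactness of $D$, that $D$ absorbs indifferent alternatives, existence of maxima/minima via the finite-intersection-property argument, and the case analysis showing $D\cup\{z\}$ is a domain) are all sound.
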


While the full proof is in the Appendix, here we provide some intuition for why this is true. If a maximal domain were disconnected, it could be partitioned into two compact sets, say $A$ and $B$. Since the preference is continuous and complete on the domain, it would be possible to get a best element in one set and a worst element in the other that are ``adjacent" in the preference order. If these boundary elements were indifferent, it would be possible to split their connected indifference class between $A$ and $B$, yielding a contradiction. If one is strictly preferred to the other, the lack of jumps ensures the existence of an intermediate alternative $z$. We show in the proof that adding $z$ to the domain maintains completeness, contradicting the maximality of the original domain.

The following example identifies an important class of connected preferences:

\begin{example}
Let $X$ be the set of Borel probability measures (lotteries) on a compact metric space of prizes $Z$, equipped with the topology of weak convergence. Following \citet*{Dubra2004}, we say that the preference $\succsim$ is an \textit{expected multi-utility preference} if there exists a set $\mathcal{U} \subseteq \mathbb{R}^Z$ of continuous functions such that $x \succsim y$ if and only if
\[
\int_Z u dx \geq \int_Z u dy
\]
holds for all $u \in \mathcal{U}$. It is easy to verify that all the assumptions of Theorem~\ref{th:sufficiency} hold. Thus, $\succsim$ is connected.
\end{example}

It should be stressed that the linearity of expected multi-utility preferences in the previous example is key to its connectedness. One way of appreciating this fact is to consider preferences that admit a finite, continuous, and strictly quasiconcave multi-utility representation. Such preferences are interesting because they yield optimal choices that vary continuously with the budget set.\footnote{Theorem~1 in \citet{GornoRivello2023} provides a maximum theorem for such preferences for the case of compact $X$.} However, a preference satisfying these conditions may fail to be connected, as the following example demonstrates.

\begin{example}
\label{ex:quasiconcave}
    Let $X=[-1,1]$ and let $\succsim$ be the preference represented by $\{u,v\} \subseteq \mathbb{R}^X$ with $u(x)=1-x^2$ and $v(x)=x-x^2$. Both $u$ and $v$ are continuous and strictly quasiconcave. The indifference classes of $\succsim$ are singletons, thus connected. Moreover, $u(1)=u(-1)=0$ and $v(1)=0>-2=v(-1)$ imply that $1 \succ -1$. However, there is no $x \in X$ such that $1 \succ x \succ -1$ (for $1 \succ x$ implies $u(1) \geq u(x)$, which is equivalent to $x^2 \geq 1$ and so can only be satisfied at $x \in \{1,-1\}$). This means that $(1,-1)$ is a jump of $\succsim$. Thus, by Theorem~\ref{th:necessity}, $\succsim$ is not connected.
\end{example}

\begin{figure}[h]
    \centering
    \includegraphics[scale=0.5]{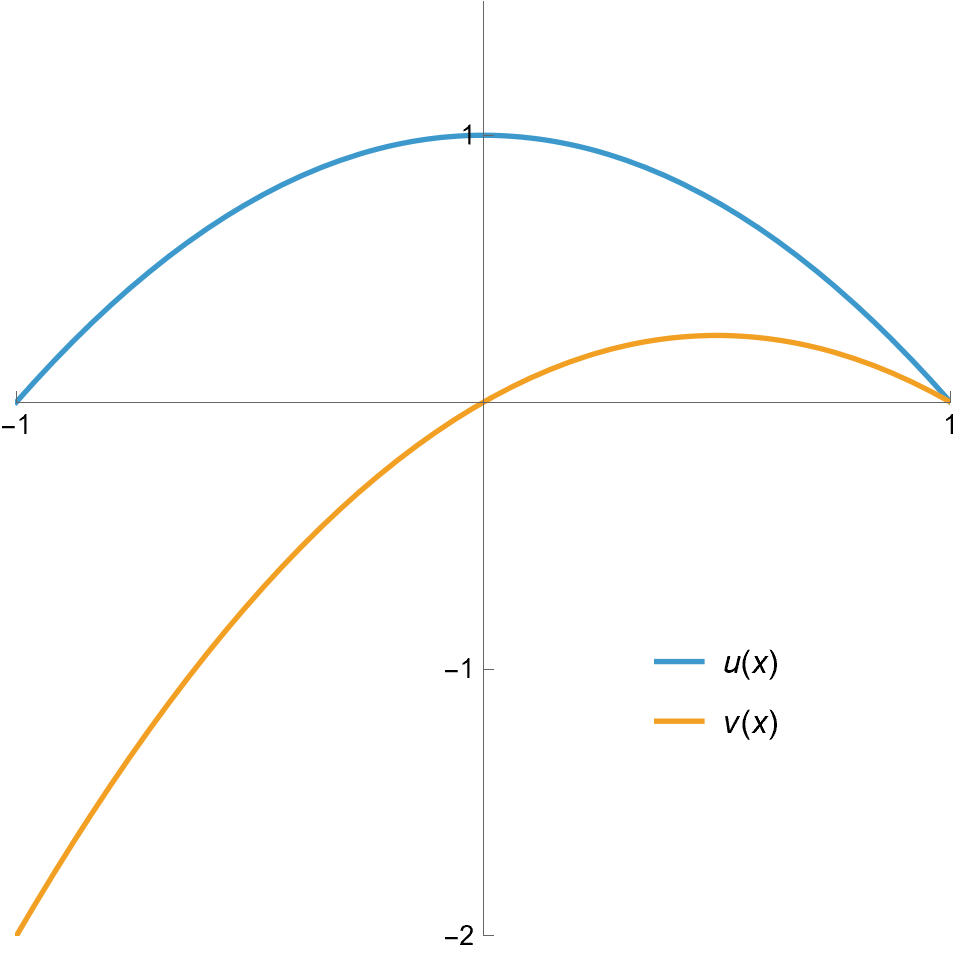}
    \caption{Finite multi-utility preference in Example~\ref{ex:quasiconcave}}
    \label{fig:quasiconcave}
\end{figure}

Example~\ref{ex:quasiconcave} shows that even quite strong restrictions on general multi-utility outside of the lottery framework may fail to ensure connected maximal domains. While expected multi-utility remains an important example in which the assumptions of Theorem~\ref{th:sufficiency} hold, it is worth noting that these assumptions are satisfied in more general models. For instance, assuming that $X$ is convex, we say that $\succsim$ satisfies \textit{betweenness} if $x \succsim y$ implies $x \succsim \alpha x+(1-\alpha)y \succsim y$ for all $x,y \in X$ and $\alpha \in [0,1]$.\footnote{This axiom was introduced by \citet*{Dekel1986}. Prominent examples of preferences satisfying betweenness include preferences satisfying the independence axiom (such as expected utility or the expected multi-utility preferences studied in \citet*{Dubra2004}) and also preferences exhibiting disappointment aversion as in \citet*{Gul1991}.} The following lemma shows that betweenness is sufficient for two of the three preference conditions in Theorem~\ref{th:sufficiency}.

\begin{lemma}
\label{lemma:betweenness}
If $X$ is convex and $\succsim$ is a continuous preference that satisfies betweenness, then $\succsim$ has connected indifference classes and no jumps.
\end{lemma}

As a consequence, the application of  Theorem~\ref{th:sufficiency} to preferences satisfying betweenness is quite direct. Still, it is worth noting that the reach of Theorem~\ref{th:sufficiency} goes beyond, allowing us to establish the connectedness of preferences that violate betweenness.

\begin{example}
\label{ex:fibers}
Let $X = [0,1]^2$. For each $\gamma \in (0,\infty)$, let 
\[
C_\gamma := \left\{(t, t^\gamma) \in X \mid t \in [0,1]\right\}
\]
be the graph of a power function from the unit interval onto itself with exponent $\gamma$. We also include the limiting cases $C_0 = \left(\{0\} \times [0,1]\right) \cup \left([0,1] \times \{1\}\right)$ and $C_\infty = \left([0,1] \times \{0\}\right) \cup \left(\{1\} \times [0,1]\right)$, describing the left-upper and bottom-right boundaries of the unit square, respectively. Define a preference $\succsim$ on $X$ as:
\[
\succsim \hspace{2pt} := \left\{(x,y)\in X \times X\middle| x_1 \geq y_1 \wedge x_2 \geq y_2 \wedge \exists \gamma \in [0, \infty]: \{x,y\}\subseteq C_\gamma \right\}.
\]
$\succsim$ is continuous, antisymmetric (so it has connected indifference classes), and has no jumps. Thus, by Theorem~\ref{th:sufficiency}, $\succsim$ is connected. 
Crucially, this preference violates betweenness. Consider any $\gamma \neq 1$ (so $C_\gamma$ is the graph of the strictly convex or strictly concave function $t \to t^\gamma$). For any distinct $x, y \in C_\gamma$, the convex combination $z = \alpha x + (1-\alpha)y$ (for $\alpha \in (0,1)$) does not lie on $C_\gamma$. Since the curves $\{C_\gamma\}_\gamma$ partition $X \setminus \{(0,0), (1,1)\}$, $z$ belongs to a different curve $C_{\gamma'}$. Thus, $z$ is incomparable to $x$ and $y$, violating the condition $x \succsim z \succsim y$.
\end{example}

Finally, we should note that Theorem~\ref{th:sufficiency} relies on the restrictive assumption that $X$ is compact. Compactness of the ambient space is not necessary for a preference to be connected. For example, if $X$ is connected, every complete preference on $X$ is connected, irrespective of whether $X$ is compact or not. However, the compactness of $X$ cannot be dropped or replaced by connectedness in Theorem~\ref{th:sufficiency}. Indeed, as the following examples suggest, in every non-compact connected subset of $\mathbb{R}$, there are continuous antisymmetric preferences without jumps that are not connected.

\begin{example}
\label{ex:interval_gap}
    Let $X=[0,1)$ and define
    \[
    \succsim \hspace{2pt} := \left\{(x,y) \in X^2 \middle| x=y  \vee \left(y=0 \wedge x \geq 1/2\right) \vee 1/2 \leq x \leq y \right\}.
    \]
    The preference $\succsim$ is obviously antisymmetric. Moreover, $\succsim$ is continuous as the sections $\left\{y \in X \middle| y \succsim x\right\}$ and $\left\{y \in X \middle| x \succsim y\right\}$ are either singletons or sets of the forms $[1/2,x]$, $[1/2,1) \cup \{0\}$, or $[x,1) \cup \{0\}$, which are closed in $X$. Finally, $\succsim$ has no jumps since for all $x,y \in X$ that could satisfy $x \succ y$, we can set $z = (x+y)/2$ when $y>x\geq 1/2$ and $z = (x+1)/2$ when $y=0$. In each case, we have $x \succ z \succ y$.
\end{example}

\begin{example}
\label{ex:interval_gap_2}
    Let $X=(0,1)$ and define
    \[
    \succsim \hspace{2pt} := \left\{(x,y) \in X^2 \middle| x=y  \vee \left(y \leq 1/3 \wedge x \geq 2/3\right) \vee x \leq y \leq 1/3 \vee 2/3 \leq x \leq y\right\}.
    \]
    The preference $\succsim$ is obviously antisymmetric. Moreover, $\succsim$ is continuous as the sections $\left\{y \in X \middle| y \succsim x\right\}$ and $\left\{y \in X \middle| x \succsim y\right\}$ are either singletons or sets of the forms $(0,x] \cup [2/3,1)$, $[2/3,x]$, $[x,1/3]$, or $(0,1/3] \cup [x,1)$, which are closed in $X$. Finally, $\succsim$ has no jumps since for all $x,y \in X$ that could satisfy $x \succ y$, we can set $z = (x+y)/2$ when $x,y \in (0,1/3]$ or $x,y \in [2/3,1)$, and we can set $z=(x+1)/2$ when $x \in [2/3,1)$ and $y \in (0,1/3]$. In each case, we have $x \succ z \succ y$.
\end{example}

Examples~\ref{ex:interval_gap} and \ref{ex:interval_gap_2} demonstrate that, without compactness, connectedness of the preference is not guaranteed by the other assumptions in Theorem~\ref{th:sufficiency}, even if we limit ourselves to connected ambient spaces. In fact, these cases point to a more general topological phenomenon: without compactness, one can often construct continuous antisymmetric preferences with no jumps that fail to be connected. As shown formally in Appendix~\ref{app:compactness}, compactness is indeed a necessary condition for the connectedness of all such preferences in most topological spaces of interest in Economics.

\section{Characterization of maximal domains}
\label{sec:domains}

Building on Theorem~\ref{th:sufficiency}, we can offer a useful characterization of maximal domains:

\begin{theorem}
\label{th:characterization}
Assume $X$ is compact and $\succsim$ is continuous, has connected indifference classes, and has no jumps. Then, a set $A \subseteq X$ is a maximal domain if and only if it is a connected domain that contains every indifferent alternative and has no exterior bound.
\end{theorem}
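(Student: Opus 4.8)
The plan is to handle the two implications separately. The ``only if'' direction should follow quickly from Theorem~\ref{th:sufficiency}; the ``if'' direction will combine Lemma 1 and Proposition 1 in \citet{Gorno2018} with an argument in the spirit of the proof of Theorem~\ref{th:sufficiency}.

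For necessity, suppose $A$ is a maximal domain. It is a domain by definition and connected by Theorem~\ref{th:sufficiency}, whose hypotheses match ours exactly. To show it contains every indifferent alternative, I would take $x \in A$ with $y \sim x$, note that transitivity makes $y$ comparable to every element of $A$, so that $A \cup \{y\}$ is a domain, and then invoke maximality to get $y \in A$. The same recipe gives ``no exterior bound'': if $x \succsim A \succsim y$, then $A \cup \{x\}$ and $A \cup \{y\}$ are both domains, so maximality forces $x,y \in A$.

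For sufficiency, suppose $A$ is a connected domain that contains every indifferent alternative and has no exterior bound. First I would observe that $A \neq \emptyset$ (the empty set fails the no-exterior-bound condition once $X \neq \emptyset$). By Lemma 1 in \citet{Gorno2018}, $A$ lies inside some maximal domain $D$, and by Proposition 1 in \citet{Gorno2018}, $D$ is compact, hence has a best element $\overline{x}_D$ and a worst element $\underline{x}_D$. Assuming $A \subsetneq D$, I would pick $z \in D \setminus A$; since $D$ is a domain, $z$ is comparable to every element of $A$, and since $A$ contains every indifferent alternative and $z \notin A$, $z$ is indifferent to none of them. Then $\{a \in A \mid z \succsim a\}$ and $\{a \in A \mid a \succsim z\}$ partition $A$ into two sets that are relatively closed by continuity, so connectedness of the nonempty set $A$ forces one of them to be empty: either $z \succsim A$ or $A \succsim z$. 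In the first case, $\underline{x}_D$ being the worst element of $D \supseteq A$ gives $z \succsim A \succsim \underline{x}_D$, so no exterior bound yields $z \in A$, a contradiction; the second case is symmetric, using $\overline{x}_D$. Hence $A = D$ is a maximal domain.

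I expect sufficiency to be the delicate part. The key point is that a point lying outside $A$ but inside a larger maximal domain is forced --- by comparability, non-indifference, and connectedness of $A$ --- to be either an upper or a lower bound of $A$, and the no-exterior-bound hypothesis is then precisely what produces the contradiction, with compactness of $D$ supplying the extremal element needed to trigger it. It is worth noting that gaplessness and connected indifference classes play no direct role in the sufficiency argument; they enter only through Theorem~\ref{th:sufficiency} in the necessity part.
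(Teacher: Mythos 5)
Your proof is correct and shares the same skeleton as the paper's, but the sufficiency half is arranged differently enough to be worth comparing. Necessity is handled exactly as in the paper: connectedness comes from Theorem~\ref{th:sufficiency}, and the two closure properties of maximal domains are the same easy observations the paper leaves implicit. For sufficiency, both you and the paper embed the candidate set in a maximal domain $D$ via Lemma 1 in \citet{Gorno2018}, pick a point $z \in D \setminus A$, note it is comparable but not indifferent to everything in $A$, and consider the relatively closed split of $A$ into $\left\{a \in A \middle| a \succsim z\right\}$ and $\left\{a \in A \middle| z \succsim a\right\}$. The difference is the order in which the hypotheses act: the paper first invokes the no-exterior-bound property to produce elements of $A$ strictly above and strictly below $z$, so both pieces are nonempty and connectedness is contradicted directly, whereas you let connectedness force one piece to be empty, so that $z$ is a one-sided bound of $A$, and only then trigger the no-exterior-bound property by pairing $z$ with the best or worst element of $D$ (whose existence you get from compactness of the maximal domain via Proposition 1 in \citet{Gorno2018}). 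Your arrangement costs an extra appeal to compactness and extremal elements, but it buys something: it applies the definition ``$x \succsim A \succsim y$ implies $x,y \in A$'' literally as a two-sided condition, while the paper's step extracting $y \succ x \succ z$ in effect uses its one-sided readings; you also make the (harmless) nonemptiness of $A$ explicit. In both arguments, gaplessness and connected indifference classes enter only through Theorem~\ref{th:sufficiency} in the necessity direction, as you note.
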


The intuition behind the necessity in this characterization is straightforward. Given the assumptions, Theorem~\ref{th:sufficiency} ensures that any maximal domain must be connected. The other two conditions are also required to ensure maximality: if a domain did not contain every indifferent alternative, it could be extended by adding any of the missing indifferent points; if it had an exterior bound, it could be extended by adding the bound. The sufficiency asserted by Theorem~\ref{th:characterization} also rests on a simple idea: the three conditions together leave no room for a domain to be extended. Containing every indifferent alternative and having no exterior bound already ``seal off'' the domain from the outside, and connectedness ensures there are no internal gaps either, implying the domain is already as large as it can be.

We finish this section discussing the two additional assumptions employed in Theorem~\ref{th:characterization}: the compactness of $X$ and that $\succsim$ has connected indifference classes.

\subsection{Compactness}
Compactness of $X$ cannot be dispensed with, as the following example shows.

\begin{example}
Let $X=\{-1\}\cup [0,1)$ and $\succsim \hspace{2pt} = \left\{(x,y) \in X^2 \middle| x=-1 \vee x \geq y \geq 0 \right\}$. Then, $X$ is bounded, locally compact and $\sigma$-compact, but fails to be compact. Moreover, $\succsim$ is complete, continuous, and has no jumps. However, the only maximal domain is $X$ itself and is not connected.
\end{example}

Example~\ref{ex:interval_gap} in Section~\ref{sec:connected} shows that the characterization does not generally hold without compactness even if we confine ourselves to connected ambient spaces. Proposition~\ref{prop:non-compact} in Appendix~\ref{app:compactness} formally establishes that compactness is necessary to obtain the characterization in Theorem~\ref{th:characterization} for a broad class of topological spaces.

\subsection{Connected indifference classes}

On the one hand, the assumption that indifference classes are connected is not strictly necessary for the conclusion of Theorem~\ref{th:characterization}. That is, there are examples that violate this condition where the equivalence in the theorem holds:

\begin{example}
\label{ex:disconnected_indifference1} 
Let $X=[-1,1]$ and $\succsim \hspace{2pt}=\left\{(x,y) \in X^2 \middle| x^2 \geq y^2\right\}$. Since $\succsim$ is complete and $X$ is connected, $\succsim$ is connected, even though all indifference classes but $\{0\}$ are disconnected.
\end{example}

On the other hand, it is a tight condition: there are examples that violate it, satisfy the remaining conditions, and for which the equivalence in the theorem fails to hold:

\begin{example}
\label{ex:disconnected_indifference2}
Let $X=\{-1\}\cup[0,1]$ and $\succsim \hspace{2pt}=\left\{(x,y) \in X^2 \middle| x^2 \geq y^2\right\}$.
\end{example}

Examples~\ref{ex:disconnected_indifference1} and \ref{ex:disconnected_indifference2} highlight that the characterization in Theorem~\ref{th:characterization} could potentially be improved by weakening the assumptions of connected indifference classes. However, we leave the task of refining the characterization in this direction for future research and move to investigate when the conclusion of connectedness that Theorem~\ref{th:sufficiency} obtains under that assumption can be strengthened to arc-connectedness, a property that ensures that, when two alternatives are comparable, there is a transition between them over which preferences are continuously expressed.

\section{Arc-connected preferences}
\label{sec:arc_connected}

In many economic applications, it is useful to know not only that a domain is connected (``in one piece''), but that one can move continuously from one alternative to another while remaining within the domain of comparability. This property, known as arc-connectedness, is strictly stronger than standard connectedness.

\begin{definition}
$\succsim$ is \textit{arc-connected} if every maximal domain is arc-connected.
\end{definition}

In the particular case of antisymmetric preferences (\textit{i.e.}, partial orders) on a metrizable space, we can strengthen the conclusion of Theorem~\ref{th:sufficiency}:

\begin{theorem}
\label{th:sufficiency_arc}
If $X$ is a compact second-countable space and $\succsim$ is a continuous antisymmetric preference with no jumps, then $\succsim$ is arc-connected.
\end{theorem}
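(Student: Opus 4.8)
The plan is to show that an arbitrary maximal domain $D$ is not merely connected but in fact homeomorphic to the unit interval $[0,1]$, which makes it arc-connected. First I would observe that Theorem~\ref{th:sufficiency} applies: an antisymmetric preference has singleton indifference classes, which are connected, so $\succsim$ is connected and $D$ is a connected set; moreover $D$ is compact by Proposition 1 in \citet{Gorno2018} and metrizable as a subspace of $X$. If $D$ is a singleton it is trivially arc-connected, so assume $D$ has at least two elements. Since $D$ is a domain and $\succsim$ is antisymmetric, $\succsim$ restricts to a \emph{linear order} on $D$.

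The crucial first step is to identify the order topology $\tau_{\mathrm{ord}}$ induced on $D$ by this linear order with the subspace topology $\tau$. One inclusion, $\tau_{\mathrm{ord}}\subseteq\tau$, is immediate from continuity of $\succsim$, since the subbasic open rays $\{x\in D\mid x\succ a\}$ and $\{x\in D\mid a\succ x\}$ are the complements in $D$ of the sets $\{x\in D\mid a\succsim x\}$ and $\{x\in D\mid x\succsim a\}$, which are closed because $\succsim$ is continuous. For the reverse inclusion I would invoke the standard fact that a continuous bijection from a compact space onto a Hausdorff space is a homeomorphism: the identity map $(D,\tau)\to(D,\tau_{\mathrm{ord}})$ is such a map, as order topologies are always Hausdorff and $(D,\tau)$ is compact. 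Hence $\tau=\tau_{\mathrm{ord}}$, so $D$ is a compact, connected, metrizable linearly ordered topological space.

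Next I would extract the order-theoretic consequences. Compactness endows $D$ with a $\succsim$-greatest and a $\succsim$-least element and makes the order Dedekind complete. Connectedness forces the order to be dense: if $a\succ b$ admitted no element of $D$ strictly in between, then $\{x\in D\mid b\succsim x\}$ and $\{x\in D\mid x\succsim a\}$ would form a partition of $D$ into two nonempty closed sets, a contradiction. Finally, compact metrizable spaces are second countable, hence separable, and in a densely ordered space a topologically dense subset is order-dense, so $D$ possesses a countable order-dense subset. Cantor's characterization of the order type of $[0,1]$ — a Dedekind-complete, densely ordered linear order with least and greatest elements and a countable order-dense subset is order-isomorphic to $[0,1]$ (see, e.g., \citet{Ok2007}) — combined with the elementary fact that an order-isomorphism between linearly ordered sets is a homeomorphism for the order topologies, then yields that $D$ is homeomorphic to $[0,1]$ and therefore arc-connected. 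Since $D$ was an arbitrary maximal domain, $\succsim$ is arc-connected.

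I expect the main obstacle to be the topological identification $\tau=\tau_{\mathrm{ord}}$: without it, connectedness of $D$ would not translate into order-denseness, and, more importantly, the eventual order-isomorphism with $[0,1]$ would only be a homeomorphism for the a priori coarser order topology rather than for the topology $D$ actually carries. The compact-to-Hausdorff argument disposes of this cleanly, after which the remaining steps are classical facts about ordered topological spaces, for which \citet{Eilenberg1941} is the natural reference.
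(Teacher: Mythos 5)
Your proof is correct, but it takes a genuinely different route from the paper. The paper's argument is representation-based: since $D$ is compact, metrizable (hence second countable), and $\succsim$ is complete and continuous on $D$, it invokes the existence of a continuous utility representation $u:D\to\mathbb{R}$, notes that $u(D)$ is a compact interval (WLOG $[0,1]$) because $D$ is connected by Theorem~\ref{th:sufficiency}, and then uses antisymmetry to make $u$ a continuous bijection from a compact space onto a Hausdorff space, hence a homeomorphism onto $[0,1]$. You instead bypass the utility representation theorem: you first identify the subspace topology on $D$ with the order topology (your compact-to-Hausdorff identity-map argument), then verify Dedekind completeness, order-density, endpoints, and order-separability, and conclude via Cantor's characterization of the order type of $[0,1]$ that $D$ is order-isomorphic, hence homeomorphic, to $[0,1]$. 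In effect you reconstruct by hand the representation that the paper simply cites, so your proof is longer but more self-contained in order-topological terms; it also explicitly disposes of the degenerate case where $D$ is a singleton, which the paper's ``without loss of generality $u(D)=[0,1]$'' step quietly skips. Amusingly, both arguments hinge on the same compactness trick (a continuous bijection from a compact space to a Hausdorff space is a homeomorphism), applied by the paper to $u$ and by you to the identity map. The paper's route is the more economical one given that the Debreu--Eilenberg representation theorem is available off the shelf; yours trades that citation for Cantor's theorem plus standard facts about linearly ordered topological spaces.
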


The argument for Theorem~\ref{th:sufficiency_arc} rests on the existence of a ``local" continuous utility representations in every maximal domain. Due to antisymmetry, this representation can be inverted, allowing us to construct an arc between any two comparable points.

Like Theorem~\ref{th:sufficiency}, Theorem~\ref{th:sufficiency_arc} assumes that $X$ is compact. As Example~\ref{ex:interval_gap} in Section~\ref{sec:connected} shows, even in a uniquely arc-connected space such as $[0,1)$, lack of compactness allows for maximal domains that fail to be connected (and, thus, fail to be arc-connected as well). Moreover, Theorem~\ref{th:sufficiency_arc} further assumes that $X$ is second-countable and that $\succsim$ is antisymmetric. The second-countability of $X$ is a relatively mild technical condition\footnote{In the context of Theorem~\ref{th:sufficiency_arc}, if $X$ is metrizable then it is also second-countable.} used to ensure that, in every domain of $\succsim$, it is possible to define a continuous utility representation of the restriction of $\succsim$ to that domain. Finally, the following example shows that the antisymmetry assumption in Theorem~\ref{th:sufficiency_arc} cannot be relaxed to connected indifference classes.

\begin{example}
    \label{ex:topologist_curve}
    Let $X=[0,1]\times [-1,1]$ and define the ``closed topologist's sine curve" $C := \left\{(x_1,x_2) \in X \middle| x_1 \in (0,1] \wedge x_2 = \sin(1/x_1)\right\} \cup \{0\}\times [-1,1]$. $C$ is closed and connected but not arc-connected. Consider the preference given by
    \[
    \succsim \hspace{2pt} := \left\{(x,y)\in X^2\middle| x=y \vee x,y \in C\right\}.
    \]
    Clearly, $X$ is compact and $\succsim$ is continuous and has no jumps (there are no points $x,y \in X$ such that $x \succ y$). Moreover, $\succsim$ has connected indifference classes ($C$ is the only nontrivial indifference class). Hence, by Theorem~\ref{th:sufficiency}, all maximal domains are connected. Note also that $X$ is compact and second-countable, but $\succsim$ is not antisymmetric. This means that all conditions for Theorem~\ref{th:sufficiency_arc} except antisymmetry hold. However, the  maximal domain $C$ is not arc-connected, so the conclusion of Theorem~\ref{th:sufficiency_arc} fails.
\end{example}

\section{Applications}
\label{sec:applications}

\subsection{A maximum theorem for incomplete preferences}

Theorem~\ref{th:characterization} is directly used in the proof of Theorem 4 in \citet*{GornoRivello2023}, a result that identifies conditions on primitives under which ``domain continuity" (the requirement that limits of maximal domains are themselves maximal domains) is equivalent to the validity of a maximum theorem (so that limit points of maximal and minimal elements are themselves maximal and minimal, respectively). In particular, the characterization of maximal domains provided by Theorem~\ref{th:characterization} is critical in establishing that ``domain continuity" is necessary for a maximum theorem to hold in this setting.

\subsection{First-order stochastic dominance}

Suppose $X$ is the set of cumulative distribution functions (CDFs) over a compact interval $[0,\overline{z}]$ (endowed with the topology of weak convergence of the associated probability measures). Let $\geq_1$ denote the first-order stochastic dominance relation on $X$, that is, $F \geq_1 G$ if and only if $F(z) \leq G(z)$ for all $z \in [0,\overline{z}]$.

\begin{proposition}
\label{prop:arc-connected}
$\geq_1$ is arc-connected. Moreover, a subset of $X$ is a maximal domain of $\geq_1$ if and only if it is the image of a $\geq_1$-increasing arc joining the degenerate CDFs associated with $0$ and $\overline{z}$.
\end{proposition}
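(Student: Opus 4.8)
The plan is to obtain arc-connectedness directly from Theorem~\ref{th:sufficiency_arc} and then to read off the maximal domains from Theorem~\ref{th:characterization}. First I would verify the hypotheses of Theorem~\ref{th:sufficiency_arc}. The space $X$ is compact and metrizable because it can be identified with the space of Borel probability measures on the compact metric space $[0,\overline z]$ endowed with the weak convergence topology. The relation $\geq_1$ is antisymmetric, since $F\geq_1 G$ and $G\geq_1 F$ force $F(z)=G(z)$ for every $z$, hence $F=G$. For continuity I would show that $\{F:F\geq_1 G\}$ and $\{F:G\geq_1 F\}$ are closed, using that weak convergence $F_n\to F$ implies $F_n(z)\to F(z)$ at every continuity point $z$ of $F$, and then extending the inequality $F(z)\le G(z)$ to all $z$ by right-continuity of CDFs (the discontinuity set being countable). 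For gaplessness, given $F>_1 G$ I would take the lottery $H$ with CDF $\tfrac12\bigl(F+G\bigr)$; then $F(z)\le H(z)\le G(z)$ for all $z$ gives $F\geq_1 H\geq_1 G$, and since $F\neq G$ these inequalities are strict somewhere, so $F>_1 H>_1 G$. Theorem~\ref{th:sufficiency_arc} then gives that $\geq_1$ is arc-connected.

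For the characterization I would first record that $\delta_0$ and $\delta_{\overline z}$ are the $\geq_1$-minimum and $\geq_1$-maximum of $X$: indeed $\delta_{\overline z}(z)=0\le F(z)$ for $z<\overline z$ and $\delta_{\overline z}(\overline z)=1=F(\overline z)$, so $\delta_{\overline z}\geq_1 F$ for every $F$, and symmetrically $F\geq_1\delta_0$ for every $F$. For the ``if'' direction, let $A=\gamma([0,1])$ with $\gamma\colon[0,1]\to X$ a $\geq_1$-increasing arc and $\{\gamma(0),\gamma(1)\}=\{\delta_0,\delta_{\overline z}\}$. Then $A$ is a domain (any $\gamma(s),\gamma(t)$ with $s\le t$ are comparable), connected (continuous image of an interval), and contains every indifferent alternative (indifference classes are singletons); moreover it has no exterior bound: if $x\succsim A$ then $x\succsim\delta_{\overline z}$ (as $\delta_{\overline z}\in A$) while $\delta_{\overline z}\succsim x$ (global maximum), so $x=\delta_{\overline z}\in A$, and symmetrically any $y$ with $A\succsim y$ equals $\delta_0\in A$. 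By Theorem~\ref{th:characterization}, $A$ is a maximal domain.

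For the ``only if'' direction, let $D$ be a maximal domain. As in the proof of Theorem~\ref{th:sufficiency_arc}, there is a homeomorphism $u\colon D\to[0,1]$ representing $\geq_1$ on $D$, with larger $u$ meaning $\geq_1$-better. By Theorem~\ref{th:characterization}, $D$ has no exterior bound; since $\delta_{\overline z}\succsim D\succsim\delta_0$ holds trivially, this forces $\delta_0,\delta_{\overline z}\in D$, and antisymmetry makes them the $\geq_1$-least and $\geq_1$-greatest elements of $D$, so $u^{-1}(0)=\delta_0$ and $u^{-1}(1)=\delta_{\overline z}$. Then $\gamma:=u^{-1}\colon[0,1]\to X$ is an embedding (hence an arc), it is $\geq_1$-increasing because $u$ represents $\geq_1$, it joins $\delta_0$ and $\delta_{\overline z}$, and its image is $D$, as required. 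I expect the only genuinely delicate point to be the continuity of $\geq_1$ with respect to weak convergence, where one must handle the discontinuity points of the limiting CDF; everything else follows fairly mechanically from the general theorems.
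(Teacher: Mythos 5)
Your proposal is correct and follows essentially the same route as the paper: verify that $X$ is compact metrizable and that $\geq_1$ is continuous, antisymmetric, and gapless, apply Theorem~\ref{th:sufficiency_arc} for arc-connectedness, and then derive the characterization of maximal domains from Theorem~\ref{th:characterization}. You simply spell out details the paper leaves implicit (the continuity and gaplessness checks, the role of $\delta_0$ and $\delta_{\overline z}$ as global extremes, and the construction of the increasing arc via the utility homeomorphism), all of which are sound.
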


An analogous result holds for second-order stochastic dominance.

\subsection{Schmeidler preferences}

\citet*{Schmeidler1971} shows that, in a connected space, every nontrivial preference satisfying seemingly innocuous continuity conditions must be complete. In this section, we explore the implications of these assumptions in spaces that are not connected. 

We start by defining the class of preferences which are the subject of Schmeidler's theorem:

\begin{definition}
A preference $\succsim$ is a \textit{Schmeidler preference} if it is continuous and the sets $\left\{y \in X \middle| x \succ y\right\}$ and $\left\{y \in X \middle| y \succ x\right\}$ are open for all $x\in X$.
\end{definition}

The following definition captures a property that generalizes the conclusion of Schmeidler's theorem in terms of maximal domains:

\begin{definition}
A preference is \textit{decomposable} if every maximal domain is either a connected component or an indifference class.
\end{definition}

Note that, when $X$ is connected, every nontrivial decomposable preference is complete. More generally, any two distinct maximal domains of a decomposable preference must necessarily be disjoint. As a result, if a decomposable preference is locally nonsatiated, then no maximal domain can be trivial or, equivalently, every maximal domain must be a connected component. 

We can now state the main result of this section:

\begin{proposition}
\label{prop:Schmeidler}
Let $X$ be compact and let $\succsim$ be a Schmeidler preference with connected indifference classes. Then, $\succsim$ is decomposable if and only if $\succsim$ has no jumps.
\end{proposition}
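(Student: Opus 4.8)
The plan is to prove the two implications separately, reusing the results already in hand together with one extra idea: applying Schmeidler's own theorem not to $X$ but to the connected components of $X$. For the ``only if'' direction, suppose $\succsim$ is decomposable. A connected component is a connected set by definition, and an indifference class is connected by hypothesis, so every maximal domain --- being one or the other --- is a connected set; that is, $\succsim$ is connected in the sense of the earlier definition. Since a Schmeidler preference is in particular continuous, Theorem~\ref{th:connected_gapless} applies and yields that $\succsim$ is gapless. This direction introduces nothing new.

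For the ``if'' direction, I would start from the observation that, since $X$ is compact and $\succsim$ is continuous, gapless, and has connected indifference classes, Theorem~\ref{th:sufficiency} tells us that every maximal domain $D$ is connected; hence $D$ is contained in a unique connected component $K$ of $X$. I would then split into two cases. If there exist $a,b \in D$ with $a \succ b$, note that the restriction of $\succsim$ to $K$ is again a Schmeidler preference on the connected space $K$ --- its contour sets and strict contour sets are the intersections with $K$ of those of $\succsim$, hence closed and open in $K$ respectively --- and it is nontrivial because $a \succ b$ with $a,b \in K$; Schmeidler's theorem then forces $\succsim$ to be complete on $K$, so $K$ is a domain containing the maximal domain $D$, whence $D=K$, a connected component. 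If instead no such $a,b$ exist, then any two elements of $D$ are comparable (as $D$ is a domain) but never strictly comparable, hence indifferent; fixing any $x_0 \in D$ gives $D \subseteq \{y \in X : y \sim x_0\}$, and since that indifference class is itself a domain containing the maximal domain $D$, we conclude $D = \{y \in X : y \sim x_0\}$. Either way $D$ is a connected component or an indifference class, so $\succsim$ is decomposable.

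The only step that is not bookkeeping is the decision to invoke \citet*{Schmeidler1971} on the connected component $K$ rather than on $X$; everything else --- that the restriction of a Schmeidler preference to a subspace is a Schmeidler preference, and that a domain containing a maximal domain must coincide with it --- is routine. I expect no genuine obstacle, though one should confirm that ``nontrivial'' in Schmeidler's theorem means precisely the existence of some strict preference, which is exactly what the first case supplies.
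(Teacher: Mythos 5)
Your proof is correct, and while the ``decomposable $\Rightarrow$ gapless'' direction coincides with the paper's (connected components and indifference classes are connected, so every maximal domain is connected, and Theorem~\ref{th:connected_gapless} applies), your argument for ``gapless $\Rightarrow$ decomposable'' takes a genuinely different route. The paper first invokes Proposition 10 of \citet*{Gorno2018}, which says that for a Schmeidler preference every nontrivial connected component is contained in a maximal domain, and combines this with the connectedness of maximal domains (Theorem~\ref{th:characterization}) to conclude that every nontrivial maximal domain equals its component; you instead take a nontrivial maximal domain $D$, use Theorem~\ref{th:sufficiency} to place it inside a single component $K$, observe that the restriction of $\succsim$ to $K$ is a nontrivial Schmeidler preference on a connected space, and apply \citet*{Schmeidler1971} directly to conclude that $K$ is a domain, whence $D=K$ by maximality; your trivial case, where $D$ has no strict comparisons and must equal an indifference class, is the same observation the paper makes in one line. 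What your approach buys is self-containedness: it replaces the citation of an external structural result from the companion paper with the classical theorem that motivates the whole section, and it makes explicit where the openness of the strict contour sets (the Schmeidler condition) actually enters. What the paper's route buys is brevity and a slightly more modular statement, since the ``component inside a maximal domain'' fact is reusable elsewhere. The only points worth double-checking in your write-up are routine and you flag them yourself: that restricting to a subspace preserves both the closedness of weak contour sets and the openness of strict contour sets, and that Schmeidler's nontriviality hypothesis is exactly the existence of a strict pair, which your case split supplies.
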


Note that every continuous and complete preference is a Schmeidler preference. In that particular case, we have the following:

\begin{corollary}
\label{corollary:complete}
Let $X$ be compact and let $\succsim$ be a continuous and complete preference with connected indifference classes. Then, $X$ is connected if and only if $\succsim$ has no jumps.
\end{corollary}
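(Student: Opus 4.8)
The plan is to treat this as a quick corollary of Proposition~\ref{prop:Schmeidler}, via two elementary observations about complete preferences. First I would verify that a continuous and complete preference is automatically a Schmeidler preference: by completeness, $\left\{y \in X \mid x \succ y\right\} = X \setminus \left\{y \in X \mid y \succsim x\right\}$ and $\left\{y \in X \mid y \succ x\right\} = X \setminus \left\{y \in X \mid x \succsim y\right\}$, and both right-hand sides are open because $\succsim$ is continuous. Hence all the hypotheses of Proposition~\ref{prop:Schmeidler} are satisfied, and we may conclude that $\succsim$ is decomposable if and only if it is gapless.

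Second, since $\succsim$ is complete, $X$ itself is a domain, and no domain can strictly contain it, so $X$ is the unique maximal domain of $\succsim$. Consequently, $\succsim$ is decomposable precisely when $X$ — being its only maximal domain — is either a connected component or an indifference class.

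It then remains to observe that, under the standing hypothesis that indifference classes are connected, this last condition is equivalent to the connectedness of $X$. If $X$ is connected, then $X$ is its own (unique) connected component, so it is a connected component and $\succsim$ is decomposable. Conversely, if $\succsim$ is decomposable, then $X$ is either a connected component — which is connected by definition — or an indifference class — which is connected by hypothesis; in either case $X$ is connected. Chaining these equivalences yields: $\succsim$ is gapless $\iff$ $\succsim$ is decomposable $\iff$ $X$ is connected, which is the desired claim. There is no genuine obstacle here; the one point to handle with care is not to overlook the ``indifference class'' branch of decomposability, which also forces $X$ to be connected — and this is exactly the place where the connected-indifference-classes assumption does its work (cf.\ Example~\ref{ex:disconnected_indifference2}, which shows it cannot simply be dropped).
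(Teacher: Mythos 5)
Your proposal is correct and follows exactly the route the paper intends: the corollary is stated as an immediate consequence of Proposition~\ref{prop:Schmeidler}, using the observation (made in the text just before it) that a continuous complete preference is a Schmeidler preference, that $X$ is then the unique maximal domain, and that decomposability of that unique maximal domain is equivalent to connectedness of $X$ given connected indifference classes. Your write-up simply makes these steps explicit, so there is nothing to add.
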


\citet*{Schmeidler1971} shows that if $X$ is connected, then every nontrivial Schmeidler preference must be complete. \citet*{KhanUyanik2021} prove a converse and obtain the following characterization: $X$ is connected if and only if every nontrivial Schmeidler preference is complete. Corollary~\ref{corollary:complete} above implies a different characterization of topological connectedness for compact spaces:
if $X$ is compact, then $X$ is connected if and only if there exists at least one complete and continuous preference on $X$ with connected indifference classes and no jumps.\footnote{One direction of the equivalence is immediate: if $X$ is connected, the trivial preference that declares all alternatives indifferent satisfies all the desired properties.}

\section*{Acknowledgments}
This paper is based on material contained in Chapter 3 of Alessandro Rivello's PhD dissertation at FGV EPGE. We thank Guilherme Araújo Lima for excellent research assistance. We are also grateful to the editor and two anonymous referees, as well as Jos\'{e} Heleno Faro, Kazuhiro Hara, Paulo Klinger Monteiro, Lucas Maestri, Gil Riella, and participants at various seminars for their helpful comments. This study was financed in part by the Coordenação de Aperfeiçoamento de Pessoal de N\'{i}vel Superior - Brasil (CAPES) - Finance Code 001. Leandro Gorno also gratefully acknowledges the financial assistance of the Brazilian National Council for Scientific and Technological Development (CNPq).

\appendix

\section{Proofs}
\label{app:proofs}

\begin{proof}[Proof of Theorem~\ref{th:necessity}]
Suppose, seeking a contradiction, that $\succsim$ has a jump: there exist alternatives $x,y \in X$ such that $x \succ y$ and no $z \in X$ satisfies $x \succ z \succ y$. By Lemma 1 in \citet{Gorno2018}, there exists a maximal domain $D$ such that $\{x,y\} \subseteq D$. Define $A := \left\{z \in D\middle| z \succsim x\right\}$ and $B := \left\{z \in D\middle| y \succsim z\right\}$. Clearly, $A$ and $B$ are nonempty, $A \cap B= \emptyset$, and $A \cup B = D$. Moreover, since $\succsim$ is continuous, $A$ and $B$ are closed relative to $D$. It follows that $D$ is not connected, a contradiction.
\end{proof}

\begin{proof}[Proof of Theorem~\ref{th:sufficiency}]
Suppose, seeking a contradiction, that there is a maximal domain $D$ that is not connected. Then, there exist disjoint nonempty sets $A$ and $B$ such that $A \cup B = D$ and both are closed relative to $D$. Since $\succsim$ is continuous, Proposition 1 in \citet{Gorno2018} implies that $D$ is closed (in $X$), hence $A$ and $B$ are also closed. Moreover, since $X$ is compact, $A$ and $B$ are compact as well. 

Let $\overline{x}_A$ and $\overline{x}_B$ be the best elements in $A$ and $B$, respectively.\footnote{It is well known that a preference with closed upper sections has a best element on every compact subset of $X$. To the best of our knowledge, the first explicit statement of this fact is Theorem 5.3.4 in \cite{Rader1972}.} Since $D$ is a domain, $\overline{x}_A$ and $\overline{x}_B$ are comparable. Suppose first that $\overline{x}_A \sim \overline{x}_B$ and consider the indifference class $I := \left\{ x \in X \middle| x \sim \overline{x}_A \right\}$. Note that $I \subseteq D$, because $D$ is a maximal domain. Hence, the sets $I \cap A$ and $I \cap B$ form a partition of $I$ into two nonempty closed sets, which contradicts the assumption that $\succsim$ has connected indifference classes. 

Suppose now that $\overline{x}_A \succ \overline{x}_B$ (the remaining case is symmetric). Define the set $C:=\left\{x \in A\middle| x \succsim \overline{x}_B\right\}$. $C$ is nonempty (as $\overline{x}_A \in C$) and compact. Let $\underbar{x}_C$ be the worst element in $C$. Since $\underbar{x}_C \in C$, we have $\underbar{x}_C \succsim \overline{x}_B$. We claim that $\underbar{x}_C \succ \overline{x}_B$. If $\underbar{x}_C \sim \overline{x}_B$, then $\underbar{x}_C$ and $\overline{x}_B$ would belong to the same indifference class $I':=\left\{x \in X\middle| x \sim \overline{x}_B\right\}$. Since $I' \subseteq D$, the sets $I' \cap A$ (containing $\underbar{x}_C$) and $I' \cap B$ (containing $\overline{x}_B$) would constitute a disconnection of $I'$, a contradiction. Thus, $\underbar{x}_C \succ \overline{x}_B$.

Since $\succsim$ has no jumps, there exists $z \in X$ such that $\underbar{x}_C \succ z \succ \overline{x}_B$. It is easy to verify that $z \not \in A$ (otherwise $z \in C$, contradicting  $\underbar{x}_C \succ z$) and $z \not \in B$ (since $z \succ \overline{x}_B$ and $\overline{x}_B$ is the best element of $B$). Hence, $z \not \in D$. 

We will now show that $z$ is comparable to every alternative in $D$. Pick any $x \in D$. There are three possibilities: $x \in B$, $x \in C$, and $x \in A \setminus C$. If $x \in B$, since $z \succ \overline{x}_B$, we have $z \succ x$. If $x \in C$, since $\underbar{x}_C \succ z$, we also have $x \succ z$. Finally, if $x \in A \setminus C$, then the definition of $C$ and the assumption that $D$ is a domain imply that $\overline{x}_B \succ x$. It follows that $z \succ x$. This establishes that $z$ is comparable to every $x \in B \cup C \cup (A\setminus C) = D$, as claimed.

Since $z \not\in D$ and $z$ is comparable to every alternative in $D$, we conclude that $D \cup \{z\}$ is also a domain, contradicting the assumption that $D$ is a maximal domain.
\end{proof}

\begin{proof}[Proof of Lemma~\ref{lemma:betweenness}]
Take any $x,y \in X$ such that $x \sim y$ and $\alpha \in [0,1]$. Define $z := \alpha x+ (1-\alpha)y$. Since $x \succsim y$ and  $y \succsim x$, by betweenness, we have $x \succsim z \succsim y$ and $y \succsim z \succsim x$ and, so $z \sim y$. It follows that each indifference class is convex, thus connected.

It remains to show that $\succsim$ has no jumps. Suppose, seeking a contradiction, that there exist $x, y \in X$ such that $x \succ y$ but there is no $z$ with $x \succ z \succ y$. Consider the line segment $L = \{\lambda x + (1-\lambda)y \mid \lambda \in [0,1]\}$. Since $X$ is convex, $L \subseteq X$. By betweenness, for all $z \in L$, $x \succsim z \succsim y$. Since there are no intermediate strict preferences, for every $z \in L$, either $z \sim x$ or $z \sim y$.
Define $L_x = \{z \in L \mid z \sim x\}$ and $L_y = \{z \in L \mid z \sim y\}$. Since $\succsim$ is continuous, $L_x$ and $L_y$ are closed in $L$. They are disjoint (since $x \succ y$), nonempty (containing $x$ and $y$), and $L_x \cup L_y = L$. This implies $L$ is disconnected, a contradiction.
\end{proof}

\begin{proof}[Proof of Theorem~\ref{th:characterization}]
We start by establishing sufficiency through the following lemma:

\begin{lemma}
\label{lemma:domain_maxdomain}
Every connected domain that contains every indifferent alternative and has no exterior bound is a maximal domain.
\end{lemma}

\begin{proof}[Proof of Lemma~\ref{lemma:domain_maxdomain}]
Suppose, seeking a contradiction, that $D$ is a connected domain that contains every indifferent alternative, has no exterior bound, but is not a maximal domain. Then, by Lemma 1 in \citet{Gorno2018}, there exists a maximal domain $D'$ such that $D \subset D'$. Take $x \in D' \setminus D$. Since $D$ has no exterior bounds there are $y, z \in D$ such that $y \succ x \succ z$. Define $D_1 := \left\{w \in D \middle| w \succsim x\right\}$ and $D_2 := \left\{w \in D \middle| x \succsim w\right\}$. $D_1$ and $D_2$ are nonempty since $y \in D_1$ and $z \in D_2$. Also, $D_1 \cup D_2 = D$ because $x \in D'$ and $D'$ is a domain that contains $D$. Moreover, $D_1 \cap D_2 = \emptyset$. If this intersection were not empty, there would be $w \in D$ such that $x \sim w$, which would contradict that $D$ contains every indifferent alternative. Finally, $D_1$ and $D_2$ are closed relative to $D$. We conclude that $D$ is not connected, which is a contradiction.
\end{proof}

Now we turn to necessity. It is easy to show that every maximal domain contains every indifferent alternative and has no exterior bound. Moreover, since $\succsim$ satisfies the assumptions of Theorem~\ref{th:sufficiency}, every maximal domain is connected. 
\end{proof}

\begin{proof}[Proof of Theorem~\ref{th:sufficiency_arc}]
Let $D$ be a maximal domain. Since $X$ is second-countable, $D$ is second-countable (as a subspace). Since $\succsim$ is continuous and $X$ is compact, Theorem 1 in \citet*{Gorno2018} implies that $D$ is compact. Because $\succsim$ is complete and continuous on $D$, there exists a continuous utility representation $u:D \to \mathbb{R}$. 

Since $\succsim$ is antisymmetric, its indifference classes are singletons, hence connected. By Theorem~\ref{th:sufficiency}, $D$ is connected. It follows that $u(D)$ is connected and compact, thus a compact interval. Without loss of generality, we can assume that $u(D)=[0,1]$. Since $\succsim$ is antisymmetric, $u$ is a continuous bijection. Since $X$ is compact and $[0,1]$ is Hausdorff, $u$ is actually a homeomorphism between $D$ and $[0,1]$. It follows that $D$ is arc-connected, as desired.
\end{proof}

\begin{proof}[Proof of Proposition~\ref{prop:arc-connected}]
$X$ is a compact metrizable space (it is metrized by the L\'{e}vy metric) and $\geq_1$ is continuous, antisymmetric, and has no jumps. Thus, by Theorem~\ref{th:sufficiency_arc}, $\geq_1$ is arc-connected. In fact, the argument in the proof of Theorem~\ref{th:sufficiency_arc} shows that, in this setting, a subset of $X$ is a connected domain of $\geq_1$ that has no exterior bound if and only if it is the image of a $\geq_1$-increasing arc joining the degenerate CDFs associated with $0$ and $\overline{z}$. Hence, Theorem~\ref{th:characterization} implies the desired equivalence. 
\end{proof}

\begin{proof}[Proof of Proposition~\ref{prop:Schmeidler}]
To prove necessity, assume that $\succsim$ has no jumps. Since $\succsim$ is a Schmeidler preference, Proposition 10 in \citet*{Gorno2018} implies that every nontrivial connected component is contained in a maximal domain. Moreover, because $\succsim$ is a continuous preference on a compact space with connected indifference classes and no jumps, every maximal domain is connected by Theorem~\ref{th:characterization}. It follows that every nontrivial maximal domain is a connected component. Finally, since trivial maximal domains are indifference classes by definition, $\succsim$ is decomposable.

For sufficiency, note that, since $\succsim$ is decomposable and has connected indifference classes, every maximal domain is connected. Thus, Theorem~\ref{th:necessity} implies that $\succsim$ has no jumps.
\end{proof}

\section{On the compactness assumption}
\label{app:compactness}

Most of our results assume that the space $X$ is compact. This appendix provides a justification for this assumption by establishing an ``inverse'' result: if $X$ is a $T_1$ space where compactness is equivalent to countable compactness, and $X$ fails to be compact, we can always construct a preference that satisfies the conditions of Theorems~\ref{th:sufficiency} and \ref{th:sufficiency_arc} but fails to be connected. Because virtually all spaces of interest in economics---including all metrizable, all compact, and all second-countable spaces---satisfy this topological equivalence,\footnote{A notable exception---often arising in the analysis of infinite-horizon economies---is the dual of $L^\infty(\Omega, \Sigma, \mu)$, endowed with the weak* topology. If $L^\infty(\Omega, \Sigma, \mu)$ is infinite-dimensional, its dual---the space of bounded finitely additive measures---is not separable and admits subsets that are weak*-countably compact but strictly not weak*-compact.} the result below implies that compactness cannot be fruitfully relaxed for the general connectedness results we derive in this paper.

\begin{proposition}
\label{prop:non-compact}
Let $X$ be a $T_1$ space in which compactness is equivalent to countable compactness. If $X$ is not compact, then there exists a continuous antisymmetric preference on $X$ with no jumps that is not connected.
\end{proposition}

\begin{proof}
Because compactness is equivalent to countable compactness on $X$, the fact that $X$ is not compact implies that it is not countably compact. In a $T_1$ space, a space fails to be countably compact if and only if it contains an infinite subset $Y$ that has no limit points (accumulation points) in $X$. Because $Y$ has no limit points, it must be closed, and its relative topology must be discrete. Moreover, since $Y$ is infinite, we can extract a countably infinite subset; without loss of generality, assume $Y$ itself is countably infinite.

Let $\mathbb{D} = \{ m/2^k \mid m, k \in \mathbb{Z}, k \geq 0 \} \cap (0, 1)$ be the set of dyadic rationals in $(0, 1)$. Since $Y$ is a countably infinite set, there exists a bijection $f: Y \to \mathbb{D}$. We define the binary relation $\succsim$ on $X$ as follows:
\begin{enumerate}
    \item For any $x, y \in Y$, $x \succsim y$ if and only if $f(x) \geq f(y)$.
    \item For any $x, y \in X \setminus Y$, $x \succsim y$ if and only if $x = y$.
    \item For any $x \in Y$ and $y \in X \setminus Y$, $x$ and $y$ are incomparable.
\end{enumerate}

We first show that $\succsim$ is continuous. The preference is continuous if for every $x \in X$, the upper contour set $U(x) := \{y \in X \mid y \succsim x\}$ and the lower contour set $L(x) := \{y \in X \mid x \succsim y\}$ are closed in $X$.
If $x \in X \setminus Y$, then $U(x) = \{x\}$ and $L(x) = \{x\}$. Since $X$ is $T_1$, singleton sets are closed.
If $x \in Y$, then $U(x) = \{y \in Y \mid f(y) \geq f(x)\}$ and $L(x) = \{y \in Y \mid f(x) \geq f(y)\}$. Both sets are subsets of $Y$. Since $Y$ is closed and discrete, any subset of $Y$ is closed in $X$. In particular, $U(x)$ and $L(x)$ are closed subsets of $X$. Hence, $\succsim$ is continuous.

Next, we show that $\succsim$ is an antisymmetric preference. Reflexivity and transitivity are immediate. For antisymmetry, suppose $x \succsim y$ and $y \succsim x$. If $x, y \in Y$, then $f(x) \geq f(y)$ and $f(y) \geq f(x)$ imply $f(x) = f(y)$. Since $f$ is a bijection, $x = y$. If $x, y \in X \setminus Y$, the definition implies $x = y$. Thus, $\succsim$ is antisymmetric.

We now show that $\succsim$ has no jumps. A preference has a jump if there exist $x, y$ such that $x \succ y$ and there is no $z$ with $x \succ z \succ y$. If $x \succ y$, then by definition $x, y \in Y$ and $f(x) > f(y)$. Since $\mathbb{D}$ is dense in $(0,1)$, there exists $q \in \mathbb{D}$ such that $f(x) > q > f(y)$. Since $f$ is surjective onto $\mathbb{D}$, there exists $z \in Y$ such that $f(z) = q$. Therefore, $x \succ z \succ y$, so $\succsim$ has no jumps.

Finally, we show that the preference $\succsim$ is not connected. $Y$ is a maximal domain and has infinitely many elements. Since $Y$ has no limit points, it carries the discrete topology as a subspace of $X$. A discrete space with more than one element is not connected. Thus, the preference is not connected.
\end{proof}

The topological assumptions on $X$ in Proposition~\ref{prop:non-compact} are tight. If either the $T_1$ separation axiom or the equivalence between compactness and countable compactness is dropped, the converse statement fails. One can construct pathological non-compact spaces where either the topology prevents the continuous embedding of linear orders (making the construction of a disconnected preference impossible), or where all subsets are inherently connected (making all preferences trivially connected).\footnote{The broadest standard class of topological spaces where compactness and countable compactness coincide is the class of meta-Lindel\"of spaces \citep{ArensDugundji1950}. If this equivalence is dropped, we can consider the space of countable ordinals $\omega_1$, which is a $T_1$ space that is countably compact but not compact. Because $\omega_1$ is scattered (every nonempty subset contains an isolated point), it cannot continuously embed a dense linear order, implying that any nontrivial continuous preference must exhibit jumps. Alternatively, if the $T_1$ axiom is dropped, the set of natural numbers endowed with the nested interval topology satisfies the compactness equivalence (as it is second-countable) but has all its subsets connected, rendering the maximal domains of any preference trivially connected.} 

However, because the class of $T_1$ spaces where compactness coincides with countable compactness subsumes the vast majority of choice spaces utilized in economic applications---including all metrizable, second-countable, Lindel\"of, and paracompact spaces \citep[see][Theorems 2.50, 2.51, and 2.52]{AliprantisBorder2006}---Proposition~\ref{prop:non-compact} clearly delineates the boundaries of Theorems~\ref{th:sufficiency} and \ref{th:sufficiency_arc}. The message is that, for most economic models, the assumption that $X$ is compact is tight: without it, the remaining hypotheses fail to guarantee that the preference structure is connected (or arc-connected).

\bibliographystyle{abbrvnat}
\bibliography{refs}

\end{document}